\renewcommand{\section}{\@startsection{section}{1}{0mm}
  {-\baselineskip}{0.5\baselineskip}{\bf\leftline}}
\newcommand{\etal}{et al.\xspace}
\newcommand{\ie}{\emph{i.e.,}\xspace}
\newcommand{\eg}{\emph{e.g.,}\xspace}
\begin{document}
\title{Sorting-based Interactive Regret Minimization}
%
%
%\titlerunning{Abbreviated paper title}
% If the paper title is too long for the running head, you can set
% an abbreviated paper title here
%
\author{Jiping Zheng \and Chen Chen}
\authorrunning{J. Zheng and C. Chen}
% First names are abbreviated in the running head.
% If there are more than two authors, 'et al.' is used.
%
\institute{College of Computer Science and Technology\\
	 Nanjing University of Aeronautics and Astronautics, Nanjing, P.R. China\\
	\email{\{jzh,duplicc\}@nuaa.edu.cn}}

\maketitle

\vspace{-5mm}
\begin{abstract}
As an important tool for multi-criteria decision making in database systems, the regret minimization query is shown to have the merits of top-$k$ and skyline queries: it controls the output size while does not need users to provide any preferences. Existing researches verify that the regret ratio can be much decreased when interaction is available. In this paper, we study how to enhance current interactive regret minimization query by sorting mechanism. Instead of selecting the most favorite point from the displayed points for each interaction round, users sort the displayed data points and send the results to the system. By introducing sorting mechanism, for each round of interaction the utility space explored will be shrunk to some extent. Further the candidate points selection for following rounds of interaction will be narrowed to smaller data spaces thus the number of interaction rounds will be reduced. We propose two effective sorting-based algorithms namely Sorting-Simplex and Sorting-Random to find the maximum utility point based on Simplex method and randomly selection strategy respectively. Experiments on synthetic and real datasets verify our Sorting-Simplex and Sorting-Random algorithms outperform current state-of-art ones.
\vspace{-3mm}
\keywords{Regret Minimization Query; Utility Hyperplane; Conical Hull Frame; Skyline Query; Top-$k$ Query}
\end{abstract}

\vspace{-8mm}
\section{Introduction}
\label{sec1:intro}

\noindent
To select a small subset to represent the whole dataset is an important functionality for multi-criteria decision making in database systems. Top-$k$ \cite{Ilyas:2008%,Soliman:2007,Lee:2009
}, skyline \cite{Borzsony:2001,%Papadias:2005,Hose:2012,
Jan:2013} and regret minimization queries \cite{Nanongkai:2010,Nanongkai:2012,Peng:2014,Xie:2019VLDBJ,Xie:2019} are three important tools which were fully explored in the last two decades. Given a \emph{utility} (\emph{preference} or \emph{score} are another two concepts interchangeably used in the literature) function, top-$k$ queries need users to specify their utility functions and return the best $k$ points with the highest utilities. Skyline queries output the points which are not dominated by any other points in the database. Here, domination means two points are comparable. A point $p$ is said to dominate another point $q$ if $p$ is not worse than $q$ in each dimension and $p$ is better than $q$ in at least one dimension. However, both queries suffer from their inherent drawbacks. For skyline queries, the results cannot be foreseen before the whole database is accessed. In addition, the output size of skyline queries will increase rapidly with the dimensionality. Top-$k$ queries ask users to specify their utility functions, but the user may not be able to clearly know what weight each attribute should be, which brings a big challenge to top-$k$ queries. Regret minimization queries return a subset of data points from the database under a required size $k$ that minimizes the maximum regret ratio across all possible utility functions. Here regret ratio of a subset is defined as the relative difference in utilities between the top-1 point in the subset and the top-1 point in the entire database.

The regret minimization query has the merits of both top-$k$ and skyline queries, \ie the output size ($k$) can be controlled while it does not need users to specify the utility functions. Moreover, it has been verified that small regret ratio can be achieved by presenting only a few tuples \cite{Nanongkai:2010}. For example, when there are 2 criteria (dimensions/attributes), 10 points are presented to guarantee a maximum regret ratio of 10\% in the worst case, and the same number of points still make the maximum regret ratio below 35\% for 10 criteria. But the regret ratios shown above cannot make users satisfied. If we want to achieve 1\% maximum regret ratio with 10 criteria, we have to show about 1,000 points to the user \cite{Nanongkai:2010}. Fortunately, it has shown that interaction is much helpful to reduce the user's regret ratio \cite{Nanongkai:2012,Xie:2019}. In \cite{Nanongkai:2012,Xie:2019}, the interaction worked as follows which requires little user effort. When presenting a %short list
screen of points, the user chooses his/her favorite point. Based on the user's choice, the system modifies the simulated user's utility function and shows another screen of points for next round of interaction until the user has no regret in front of the displayed points or the regret ratio of the user is below a small threshold $\epsilon$. The aim of each interaction round is to approach the user's true utility function which he/she cannot specify. However, the main drawback of existing methods \cite{Nanongkai:2012,Xie:2019} is that they need too many rounds of interaction to achieve a low regret ratio. For example, for a 4-dimensional anti-correlated dataset with 10,000 points generated by the data generator \cite{Borzsony:2001}, the method proposed in \cite{Nanongkai:2012} needs 21 rounds of interaction when displaying 5 points a time to achieve 0.1\% regret ratio. For the algorithms proposed in \cite{Xie:2019}, 9 rounds of interaction are needed when displaying 4 points a time to find the user's favorite point. Too many interaction rounds of the existing methods take too much effort of the user. In this paper, we propose sorting-based interaction mechanism to reduce the rounds of users' interaction. Instead of pointing out the favorite point among the displayed points at each interaction, the user sorts the displayed points according to his/her utility function. As we know that for $s$ data points, the time complexity of choosing the best point is $O(s)$ while the time complexity of sorting $s$ data points is $O(s\log_2 s)$ on average and $O(s)$ in the best case. If $s$ is small, that is, only displaying several points, the time complexities of finding the maximum utility point and sorting have little difference. Thus sorting the displayed points does not increase user's effort. Also, when a user points out the best point, at the same time he/she has browsed all the points which makes him/her easy sort these points, especially in front of only several points. By sorting, our proposed method will need few rounds of interaction because our sorting mechanism can help to shrink the utility function space rapidly. Following is an example to show the pruning power of our sorting-based interactive regret minimization method. Suppose there are 3 points $p_1(10,1)$, $p_2(9,2)$ and $p_3(8,5)$ displayed to the user and the utility space is composed by three utility functions $\{f_1,f_2,f_3\}$ as shown in Table \ref{tab:movitationexample}. The utility is the inner product of point $p$ and utility function $f$, \eg $f_1(p_1)=10\times0.8+1\times0.2=8.2$. Without sorting, when the user points out $p_1$ is his/her favorite point, utility function $f_3$ will not be considered because $f_3(p_3)>f_3(p_1)>f_3(p_2)$ and $f_1$, $f_2$ are both possible user's utility functions. If the user sorts the 3 points with $p_1>p_2>p_3$, utility functions $f_2$, $f_3$ are pruned ($f_2$ is pruned because $f_2(p_1)>f_2(p_3)>f_2(p_2)$). We can see that our sorting based method can faster approach user's actual utility function with fewer rounds of interaction.

\begin{table}
	\setlength{\belowcaptionskip}{-0.2mm}
\centering
\footnotesize
\label{tab:movitationexample}
\caption{Utilities for different utility functions of three points $p_1$, $p_2$ and $p_3$}
\begin{tabular}{|c|c|c|c|c|c|}
  \hline
  % after \\: \hline or \cline{col1-col2} \cline{col3-col4} ...
  \multirow{2}{*}{$p$} &  \multirow{2}{*}{$A_1$} &  \multirow{2}{*}{$A_2$} & $f_1(p)$ & $f_2(p)$ & $f_3(p)$ \\
  & & & $f_1=<0.8,0.2>$ & $f_2=<0.7,0.3>$ & $f_3=<0.6,0.4>$ \\ \hline\hline
  $p_1$ & 10 & 1 & 8.2 & 7.3 & 6.4 \\ \hline
  $p_2$ & 9 & 2 & 7.6 & 6.9 & 6.2 \\ \hline
  $p_3$ & 8 & 5 & 7.4 & 7.1 & 6.8 \\
  \hline
\end{tabular}
\end{table}

\noindent
In summary, the main contributions of this paper are listed as follows.
\begin{itemize}
    \item We propose a sorting-based pruning strategy, which can shrink user's utility space more quickly than existing interactive regret minimization algorithms.
    \item Based on the utility space after pruning, we prune the candidate set by utility hyperplanes to ensure that the displayed points in the next round of interaction are more reasonable and close to the user's favorite point. Two sorting-based interactive regret minimization algorithms, namely Sorting-Random and Sorting-Simplex are proposed based on random and Simplex strategies respectively for displayed points selection.
    \item Extensive experiments on both synthetic and real datasets are conducted to verify efficiency and effectiveness of our sorting-based algorithms which outperform the existing interactive regret minimization algorithms.
\end{itemize}

\noindent
\textbf {Roadmap} Related work is described in Section \ref{sec2:related}. We provide some basic concepts of the regret minimization query as well as some geometric concepts and our interactive framework in Section \ref{sec3:problem}. Our sorting-based technique is introduced in Section \ref{sec4:method}. In Section \ref{sec4:method}, the utility function space pruning strategies via sorting as well as the candidate points selection are detailed. Experimental results on synthetic and real datasets are reported in Section \ref{sec5:exp}. Section \ref{sec6:conclusion} concludes this paper.
%
%%���ܱ�����Ҫ����
%We study how sort-based interactions can help users reduce regret ratio and number of questions. The goal behind sort-based interaction is to provide users with the best tuples in the entire database.
%
%Intuitively, different from the top-$k$ method, we do not require the user to provide a specific utility function, but by asking the user questions, get the information about utility function. Every time we ask a question, the user answers, it is an interaction.
%
%Based on the user feedback, we implicitly learn the utility function and the relationship between the points. Relationship information was similar to a transitive relationship such as $f(p_1)>f(p_2)$, $f(p_2) > f(p_3)$. Through this, we can know $f(p_1) > f(p_3)$ which can help us prune the tuples.
%The information about the utility function learned can be used to select the tuples that the user may be interested in, and presented to the user for choosing. Therefore, we insist on the hypothesis made in this article [3], which requires very few users to try: when giving a list of interactive tuples, the user can sort these tuples according to his/her preference.
%This kind of user interaction naturally appears in our daily life.
%In addition, in order to attract users to answer as many questions as possible, the design system is asymptotically optimal.
%If the system can achieve each round of display points better than the previous round, the system is gradual.
%This gives the user a sense of improvement and encourages her to answer more questions.

\section{Related Work}
\label{sec2:related}

\noindent
Top-$k$ \cite{Ilyas:2008%,Soliman:2007,Lee:2009
} and skyline \cite{Borzsony:2001,%Papadias:2005,Hose:2012,
Jan:2013} queries are two popular tools for multi-criteria decision making in database systems. However, top-$k$ query requires users to specify their utility functions and it is usually difficult for users to specify their utility functions precisely while skyline query has a potential large output problem which may make users feel overwhelmed.
There are several efforts to control skyline output size, such as $k$-dominant skyline queries \cite{Chan:2006}, threshold-based preferences \cite{Sarma:2011}, top-$k$ representative skyline \cite{lin:2007}, distance-based representative skyline \cite{Tao:2009} etc. To bridge the gap of top-$k$ query for specifying accurate utility functions and skyline query for outputting too many results, regret-based $k$ representative query which was proposed by Nanongkai et al. \cite{Nanongkai:2010} tries to output a specified size \eg $k$ while minimizing user's maximum regret ratio.

Following researches are along with the regret minimization query \cite{Nanongkai:2010} from various aspects. Peng \etal \cite{Peng:2014} introduce the concept of \textit{happy points} in which the final $k$ points included to speed up the query process. Approximate solutions in polynomial time with any user-specified accurate thresholds are proposed in \cite{Asudeh:2017,Agarwal:2017} or with asymptotically optimal regret ratio in \cite{Xie:2018}. \cite{Nanongkai:2012,Xie:2019} investigate how interaction is helpful to decrease users' regret ratios. Chester \etal \cite{Chester:2014} relax regret minimization queries from top-1 regret minimization set to top-$k$ minimization set which they call $k$-RMS query. Further, coreset based algorithms \cite{Agarwal:2017,Kumar:2018,Cao:2017} or hitting set based algorithms \cite{Agarwal:2017,Kumar:2018} are developed to solve the $k$-RMS problem efficiently and effectively. Faulkner et al. \cite{Faulkner:2015} and Qi et al. \cite{Qi:2018} extend linear utility functions used in \cite{Nanongkai:2010,Nanongkai:2012,Peng:2014,Xie:2018,Xie:2019} to \textsc{Convex}, \textsc{Concave} and \textsc{CES} utility functions and multiplicative utility functions respectively for $k$-regret queries. Zeighami and Wong \cite{Zeighami:2016} propose the metric of average regret ratio to measure user's satisfaction against output results and further developed efficient algorithms to solve it \cite{Zeighami:2019}.

From the variants of the regret minimization query listed above, the most related to our research is \cite{Nanongkai:2012} and \cite{Xie:2019}. Nanongkai \etal \cite{Nanongkai:2012} first enhance traditional regret minimization sets by user interaction. At each round of interaction, the user is presented a screen of artificial data points which have the great possibility to attract user's attentions for next interaction. Then the system asks the user to choose his/her favorite point. Based on the user's choice, the system learns user's utility function implicitly. With limited number of interaction rounds, the user may find his/her favorite point or the point within a specified small regret ratio. Xie \etal \cite{Xie:2019} argue that displaying fake points to users \cite{Nanongkai:2012} makes users disappointed for they are not indeed inside the database. Also the number of interaction rounds for the proposed method in \cite{Nanongkai:2012} is a little large. In this paper, we follow the paradigm of interactive regret minimization. Instead of pointing out the most favorite point at each round of interaction, we sort the displayed data points and fully exploit the pairwise relationship among displayed points of each interaction to narrow the utility space. Thus our proposed sorting-based interactive regret minimization which needs much less rounds of interaction than existing approaches \cite{Nanongkai:2012,Xie:2019}.

\section{Preliminaries}
\label{sec3:problem}

\noindent
Before we give our interaction framework (Section \ref{subsec:problemdefinition}), we first introduce some basic concepts for the regret minimization query (Section \ref{subsec:regretquery}). Then useful geometric concepts such as boundary points, convex hull and conical hull frame etc. are listed in Section \ref{subsec:geometricconcepts}. 
\vspace{-3mm}
\subsection{Regret Minimization Query}
\label{subsec:regretquery}

\noindent
Let $D$ be a set of $n$ $d$-dimensional points over positive real values. For each point $p\in D$, the value on the $i$th dimension is represented as $p[i]$. Related concepts of the regret minimization query are formally introduced as follows \cite{Nanongkai:2010}.

\noindent
\textbf{Utility function.} A user utility function $f$ is a mapping $f$: $\mathbb{R}_+^d \rightarrow \mathbb{R}_+$.
Given a utility function $f$, the utility of a data point $p$ is denoted as $f(p)$, which shows how satisfied the user is with the data point $p$.

Obviously, there are many kinds of utility functions, such as convex, concave, constant elasticity of substitution (CES) \cite{Faulkner:2015} and multiplicative \cite{Qi:2018} etc. In this paper, we focus on linear utility functions which are very popular to model users' preferences \cite{Nanongkai:2010,Nanongkai:2012,Peng:2014,Xie:2018,Xie:2019}.

\noindent
\textbf{Linear utility function.} Assume there are some nonnegative real values $\{v_1,$ $v_2, \cdots, v_d\}$, where $v_i$ denotes the user's preference for the $i$th dimension. Then a linear utility function can be represented by these nonnegative reals and $f(p)=\sum_{i=1}^d v_i\cdot p[i]$. A linear utility function can also be expressed by a vector\footnote{In the following, we use utility function and utility vector interchangeably.}, \ie $v=<v_1, v_2, ..., v_d>$, so the utility of point $p$ can be expressed by the dot product of $v$ and $p$, \ie $f(p)=v\cdot p$.

\noindent
\textbf{Regret ratio.} Given a dataset $D$, a subset $S$ of $D$ and a linear utility function $f$, the regret ratio of $S$, represented by $rr_D(S, f)$, is defined as
	\[
	rr_D(S,f)=1-\frac{\max_{p\in S} f(p)}{\max_{p\in D}f(p)}
	\]	

\noindent
Since $S$ is a subset of $D$, given a utility function $f$, it is obvious that $\max_{p\in S}f(p)\leq\max_{p\in D}f(p)$ and the $rr_D(S,f)$ falls in the range $[0,1]$. The user along with utility function $f$ will be satisfied if the regret ratio approaches 0 because the maximum utility of $S$ is close to the maximum utility of $D$.

\noindent
\textbf{Maximum regret ratio.} Given a dataset $D$, a subset $S$ of $D$ and a class of utility functions $\mathcal{F}$. The maximum regret ratio of $S$, represented by $rr_D (S, \mathcal{F})$, is defined as 	
	\[
	rr_D (S,\mathcal{F})=\sup_{f\in \mathcal{F}}rr_D(S,f)=\sup_{f\in \mathcal{F}}\left(1-\frac{\max_{p\in S} f(p)}{\max_{p\in D} f(p)}\right)
	\]

To better understand above concepts, we present a concrete car-selling example for illustration. Consider a car database containing 5 cars with two attributes namely miles per gallon (MPG) and horse power (HP) whose values are normalized as shown in Table \ref{tab:carexample}. Let a linear utility function $f=<0.7,0.3>$. The utilities of 5 cars under the utility function $f$ are shown in the 4th column of Table \ref{tab:carexample}. We can see that the point with the maximum utility 0.69 is $p_5$. If we select $p_2,p_4$ as the result set, that is, $S=\{p_2,p_4\}$, we can obtain the regret ratio $rr_D(S,f)=1-\frac{\max_{p\in S}f(p)}{\max_{p\in D}f(p)}=1-\frac{0.61}{0.69}=11.6\%$.

\begin{table}[!htb]
	\setlength{\belowcaptionskip}{-0.2mm}
 \centering
  \caption{Car database and the utilities under $f$}
  \label{tab:carexample}
 \footnotesize
  \begin{tabular}{|c|c|c|c|}\hline
    Car      &MPG      &HP     &$f(p)$    \\ \hline\hline
    $p_1$    & 0.4       & 0.8      & 0.52            \\ \hline
    $p_2$    & 0.6       & 0.5      & 0.57             \\ \hline
    $p_3$    & 0.3       & 0.6      & 0.39              \\ \hline
    $p_4$    & 0.7       & 0.4      & 0.61                \\ \hline
    $p_5$    & 0.9       & 0.2      & 0.69                \\ \hline
  \end{tabular}
\end{table}
\vspace{-5mm}
\subsection{Geometric Concepts for Interactive Regret Minimization}
\label{subsec:geometricconcepts}

\noindent
Similar to \cite{Xie:2019}, interesting geometric properties can be exploited to prune the utility space and compute the maximum regret ratio of a given subset more easily. Before we define our problem, we provide useful geometric concepts for our interactive regret minimization.

\noindent
\textbf{Boundary point.}  Given a $d$-dimensional dataset $D$ of $n$ points, a point $p \in D$ is said to be an $i$th  ($i \in [1,d]$) dimension boundary point of $D$ if $p[i]$ is the largest value among all points in $D$ in $i$th dimension. Consider our example in Fig. \ref{fig:convex-hull} showing a set $D$ of 7 data points, namely $p_1,p_2,\ldots,p_7$ in a 2-dimensional space with two dimensions $A_1$, $A_2$. We can see that $p_5$, $p_1$ are the boundary points corresponding to $A_1$, $A_2$ respectively. When the values of all the points in each dimension are normalized to $[0,1]$ and let $b_i[j]=1$ if $j=i$, and $b_i[j]=0$ if $j\neq i$ where $i,j=1,...,d$, we say that $b_i$s are boundary points of $D\cup\{b_1,b_2,...,b_d\}$.
\begin{figure}[H]
	\setlength{\abovecaptionskip}{-1.5mm}
	\setlength{\belowcaptionskip}{-1mm}
	\begin{minipage}[t]{0.5\textwidth}
		\centering
		\includegraphics[width=0.5\textwidth]{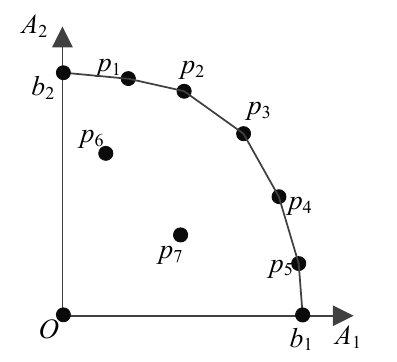}
		\caption{Convex hull with boundary points}
		\label{fig:convex-hull}
	\end{minipage}
	\begin{minipage}[t]{0.5\textwidth}
		\centering
		\includegraphics[width=0.5\textwidth]{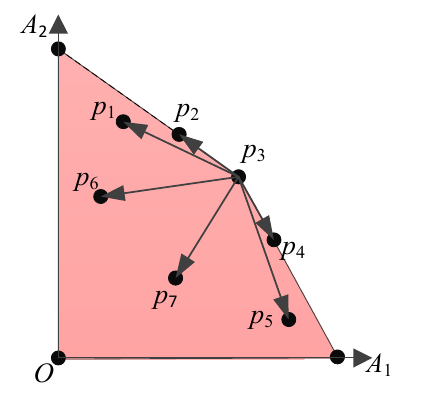}
		\caption{Conical hull}
		\label{fig:conical-hull}
	\end{minipage}
\end{figure}
%Assigning $i$-th dimension value of the $i$th boundary point to $b[i]$, and the other dimension of $b$ is assigned to 0. The points in $B_D$, $B_D=\{b_i|i\in [1,d]\}$, are constructed by the boundary points of $D$.
\noindent
Next important geometric concept is convex hull in which points have great possibility to be included in the result set of the regret minimization query \cite{Peng:2014,Asudeh:2017}.

\noindent
\textbf{Convex hull.}
In geometry, the convex hull of $D$, denoted by $Conv(D)$, is the smallest convex set containing $D$. A point $p$ in $D$ is a vertex of $Conv(D)$ if $p\notin Conv(D/\{p\})$. 
In 2-dimensional space, let $O=(0,0)$ be the origin and $b_1$, $b_2$ are two boundary points of $D\cup\{b_1,b_2\}$. Fig. \ref{fig:convex-hull} shows the convex hull of points set $D\cup\{b_1,b_2,O\}$, denoted as $Conv(D\cup\{b_1,b_2,O\})$. Note that for any linear utility function, the point in $D$ with the maximum utility must be a vertex of $Conv(D\cup\{b_1,b_2,O\})$. Here, we say a point in $Conv(D\cup\{b_1,b_2,O\})$ to be a vertex of the hull.

Although the maximum utility point for each linear utility function lies in the convex hull of $D\cup\{b_1,b_2,O\}$, investigating each point in the convex hull to find the point with maximum utility is too time-consuming because the number of points in a convex hull is usually very large. Even in 2-dimensional case, the convex hull can be as large as $O(n^{1/3})$ and for a database with 5 dimensions, the convex hull can often be as large as $O(n)$ \cite{Asudeh:2017}. Thus instead of the convex hull, the concept of conical hull frame \cite{Dula:1998} helps to find a small subset of the convex hull for the maximum utility point investigation. Following are three geometric concepts to find this kind of subset.

\noindent
\textbf{Conical hull.} Given a vertex $p$ in $Conv(D)$, we let vector set $V$ = $\{q-p|\forall q\in D/\{p\}\}$. The conical hull of a point $p$ \emph{w.r.t.} $V$ is defined to be $C_{p,V}=\{q\in \mathbb{R}^d|(q-p)=\sum_{v_i\in V}w_i\cdot v_i\}$ where $w_i\geqslant0$ \cite{Dula:1998} and the conical hull $C_{p,V}$ is also a convex cone with apex $p$ \cite{Rockafellar:2015}. Fig. \ref{fig:conical-hull} shows an example of conical hull in 2-dimensional space. In Fig. \ref{fig:conical-hull}, the conical hull of point $p_3$ is $\{p_2-p_3,p_1-p_3,p_6- p_3,p_7-p_3,p_5-p_3,p_4-p_3\}$ which is the shaded region in Fig. \ref{fig:conical-hull}.

\noindent
\textbf{Conical hull frame.} A set $V_F \subseteq V$ is defined to be a conical hull frame of a vector set $V$ if $V_F$ \emph{w.r.t.} a point $p$ is the minimal subset of $V$ such that $V_F$ and $V$ have the same conical hull of $p$, \ie $C_{p,V}$ = $C_{p,V_F}$. It is obvious that for each vector $v \in V_F$, we have $v \notin C_{p,V/\{v\}}$. In Fig. \ref{fig:conical-hull}, for point $p_3$ and vector set $V =\{p_i-p_3|\forall p_i\in D/\{p_3\}\}$, the conical hull frame of $V$ \emph{w.r.t.} $p_3$ is $V_F=\{p_2-p_3, p_4-p_3\}$ which is the frame of $V$ since it is the minimal subset of $V$ such that $C_{p_3,V} = C_{p_3,V_F}$.

\noindent
\textbf{Neighbouring vertex.} As the name suggests, the neighbouring vertex set $N_p$ of a point $p$ is composed of the neighbors of $p$ in the convex hull. For the example in Fig. \ref{fig:conical-hull}, the neighbouring vertexes of $p_3$ in $Conv(D)$ are $p_2$ and $p_4$. For a utility function $f$ and a point $p\in Conv(D)$, either $p$ is the maximum utility point to $f$ or there exists a vertex in $N_p$ whose utility is larger than that of $p$ \cite{Xie:2019}. Based on this, if $p$ is not the maximum utility point, we can find a better one in $N_p$. Intuitively, $N_p$ can be selected after the computation of the whole convex hull. As mentioned above, computing the whole convex hull is time-consuming. Fortunately, \cite{Xie:2019} shows that the conical hull frame of $V$ is close to $N_p$, \ie $q\in N_p$ if and only if $q-p\in V_F$ which makes it efficient to be calculated.

\vspace{-3mm}
\subsection{Sorting-based Interaction}
\label{subsec:problemdefinition}

\noindent
Our sorting-based interaction framework works as follows. Initially, the system interacts with a user \emph{w.r.t.} an unknown utility function, displaying $s$ points for the user to sort. We restrict $s$ to be a small integer not bigger than 10 to alleviate the burden of sorting. After the user's feedback, \ie returning the sorting list to the system, we shrink the utility space which the user's utility function may be in and prune the non-maximum utility points in the candidate set. After certain rounds of interaction like this, the system returns a point with the regret ratio below a predefined value $\epsilon$. Here, $\epsilon$ ranges from 0\% to 100\%. If $\epsilon=0$, it means that the user has no regret on the point returned by the system.

The main problem is the rounds of the user's interaction needed for our interaction framework. Comparing to the existing methods which only select the favorite point at each interaction round, by introducing sorting mechanism we can fully exploit the information the user has provided and quickly find the favorite point in the database $D$. Next section we show how sorting can help to reduce rounds of interaction for regret minimization queries.

\section{Sorting-based Interaction for Regret Minimization Queries}
\label{sec4:method}

\noindent
In this section, we first illustrate sorting is helpful to shrink the utility space which the user's unknown utility function falls in. Then, we provide the strategies to select the points for next round of interaction.

\subsection{Utility Space Shrinking via Sorting}

\noindent
In each iteration, when the displayed $s$ points are sorted and returned to the system, the system will shrink the utility function space $\mathcal{F}$ to some extent. We first define the concept of utility hyperplane then illustrate our utility space pruning procedure. Given two points $p$ and $q$, we define a utility hyperplane, denoted by $h_{p,q}$, to be the hyperplane passing through the origin $O$ with its normal in the same direction as $p-q$. The hyperplane $h_{p,q}$ partitions the space $\mathbb{R}^d$ into two halves. The half space above $h_{p,q}$ is denoted by $h_{p,q}^+$ and the half space below $h_{p,q}$ is denoted by $h_{p,q}^-$. The following lemma from \cite{Xie:2019} shows how we can shrink $\mathcal{F}$ to be a smaller space based on utility hyperplane.

\begin{lemma}
\label{lemma:shrinking}
Given utility space $\mathcal{F}$ and two points $p$ and $q$, if a user prefers $p$ to $q$, the user's utility function $f$ must be in $h_{p,q}^+\bigcap \mathcal{F}$.
\end{lemma}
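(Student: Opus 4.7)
The plan is to unfold ``prefers $p$ to $q$'' into an inequality on linear utilities and then reinterpret that inequality as membership in a half-space. Because the paper fixes attention to linear utility functions, writing $f$ as the vector $v=\langle v_1,\dots,v_d\rangle$ turns $f(p)$ and $f(q)$ into the dot products $v\cdot p$ and $v\cdot q$, which is exactly the form that matches the definition of the hyperplane $h_{p,q}$ used in the statement.

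First I would note that ``the user prefers $p$ to $q$'' translates, in this setting, to $f(p)\geq f(q)$, i.e., $v\cdot p\geq v\cdot q$. Rearranging, this is $v\cdot(p-q)\geq 0$. Next I would invoke the definition of $h_{p,q}$: it is the hyperplane through the origin whose normal direction is $p-q$, so by the convention introduced just above the lemma, $h_{p,q}^{+}=\{u\in\mathbb{R}^d:u\cdot(p-q)\geq 0\}$. The inequality I just derived therefore says precisely $v\in h_{p,q}^{+}$. Combined with the standing assumption that $f\in\mathcal{F}$, this yields $f\in h_{p,q}^{+}\cap\mathcal{F}$, which is the desired conclusion.

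The main obstacle is really only bookkeeping around the sign convention: I must confirm that the paper's ``half space above $h_{p,q}$'' indeed denotes the side into which the normal $p-q$ points, and not the opposite one, otherwise the statement would involve $h_{p,q}^{-}$. Once that convention is pinned down, the lemma reduces to a one-line restatement of the definition of linear preference and no further geometric machinery is required. A minor secondary point is whether ``prefers'' is meant strictly or weakly; for the closed half-space $h_{p,q}^{+}$ the weak version $f(p)\geq f(q)$ fits directly, whereas a strict reading would correspond to the open half-space, with the argument otherwise unchanged.
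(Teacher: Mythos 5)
Your proposal is correct: unfolding ``prefers $p$ to $q$'' into $v\cdot(p-q)\geq 0$ and reading that as membership in the closed half-space $h_{p,q}^{+}$ whose defining normal is $p-q$ is exactly the intended argument, and your caveats about the sign convention and strict-versus-weak preference are the only points that need checking. The paper itself gives no proof (it imports the lemma from Xie et al.), so there is nothing further to compare against; your one-line derivation is the standard and complete justification.
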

We can find that the half space $h_{p,q}^+$ represents the range of all possible utility functions for $p$ is prior to $q$. For example in Fig. \ref{fig:shrinking}, the system presents three points in 3-dimensional space to the user, $p=(\frac{1}{2},0,\frac{1}{2})$, $q=(0,\frac{1}{2},\frac{1}{2})$ and $r=(\frac{1}{2},\frac{1}{2},0)$, the user sorts $p, q, r$ based on his/her unknown utility function. The region of $\triangle ABC$ represents all possible values of utility functions, $\sum_{i=1}^{d}f[i]=1$. Sorting information can be fully exploited as follows.
\begin{itemize}
  \item According to $f(p)>f(q)$, the utility hyperplane $Om_1n_1p_1$ (the blue rectangle in Fig. \ref{fig:shrinking}(a)) is constructed. The part where the hyperplane intersects with the $\triangle ABC$ is a straight line $Ar$, where the region of $\triangle ABr$ contains all possible utility functions that satisfy $f(p)>f(q)$, and the region of $\triangle ACr$ contains all possible utility functions that satisfy $f(p)<f(q)$. So the utility space $\triangle ABr$ is reserved.
  \item For $f(p)>f(r)$ and $f(r)>f(q)$, similar to the above analysis, only the regions of $\triangle ABq$ and  $\triangle BCp$ are reserved.
\end{itemize}
 After this interaction round, the utility space $\mathcal{F}$ containing the user's utility function shrinks from $\triangle ABC$ to $\triangle Bpt=\triangle ABC\cap\triangle ABr\cap\triangle ABq\cap\triangle BCp$ (Fig. \ref{fig:ShrunkUS}(a)). As a contrast, if only selecting the favorite point $p$ at this round, \ie without (WO) sorting, it implies $f(p)>f(q)$ and $f(p)>f(r)$, the utility space only shrinks from $\triangle ABC$ to $\triangle AtB=\triangle ABC\cap\triangle ABr\cap\triangle ABq$ as Fig. \ref{fig:ShrunkUS}(b) shows. It is obvious that the shrunk utility space $\mathcal{F}_{sorting}$ belongs to the shrunk utility space $\mathcal{F}_{nosorting}$ without sorting, \ie $\mathcal{F}_{sorting}\subseteq\mathcal{F}_{nosorting}$. The idea shown here can be naturally extended to high dimensional data space, thus our sorting-based interactive regret minimization is superior to existing methods \cite{Nanongkai:2012,Xie:2019}.
\begin{figure}[htbp]
	\setlength{\abovecaptionskip}{-1.5mm}
	\setlength{\belowcaptionskip}{-1mm}
    \centering
    \subfigure[$f(p)>f(q)$]{\includegraphics[width=0.22\textwidth]{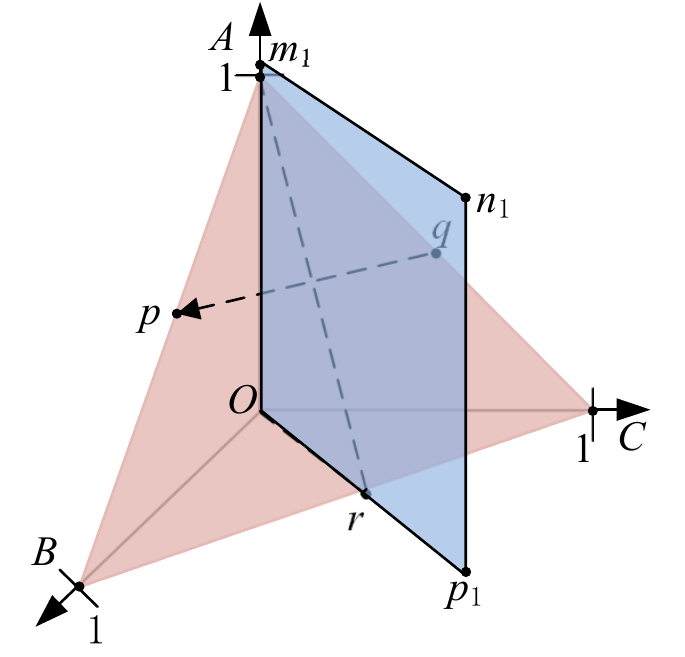}}
    \subfigure[$f(p)>f(r)$]{\includegraphics[width=0.22\textwidth]{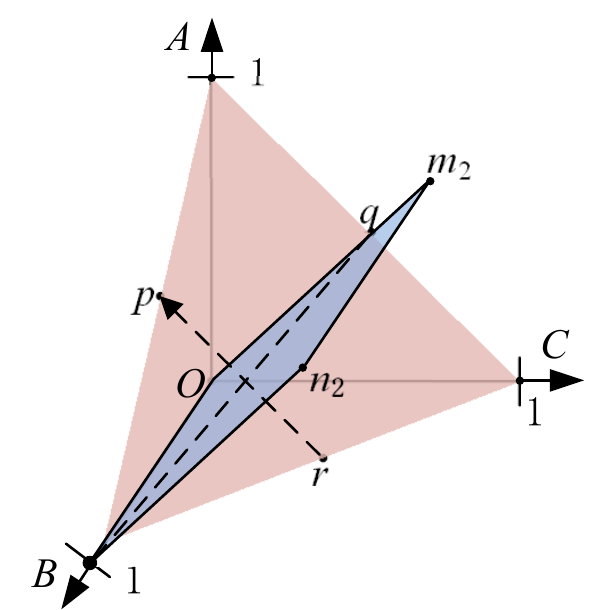}}
    \subfigure[$f(r)>f(q)$]{\includegraphics[width=0.22\textwidth]{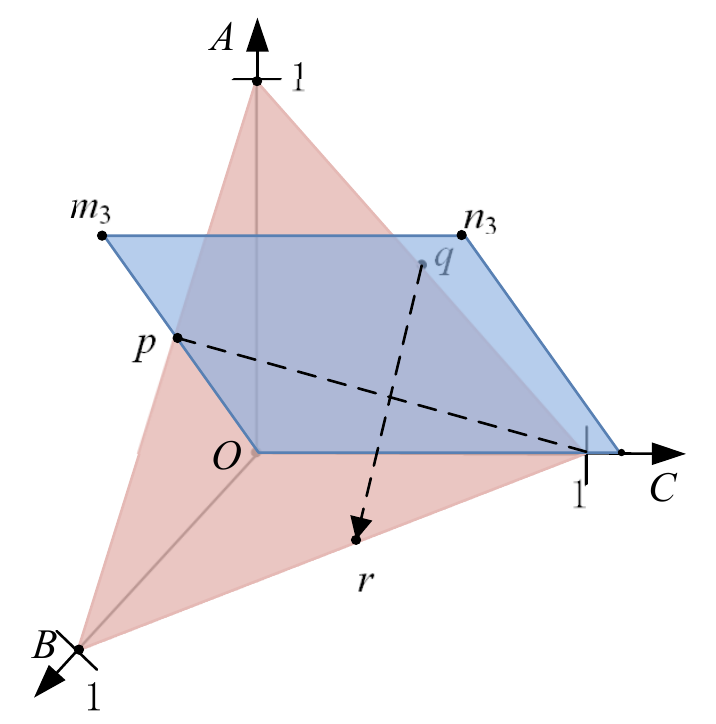}}
    \caption{Utility Space Shrinking via Sorting}
    \label{fig:shrinking}
\end{figure}

\begin{figure}[htbp]
	\setlength{\abovecaptionskip}{-1.5mm}
	\setlength{\belowcaptionskip}{-1mm}
%  \begin{minipage}[t]{0.66\textwidth}
    \centering
    \subfigure[with Sorting]{\includegraphics[width=0.19\textwidth]{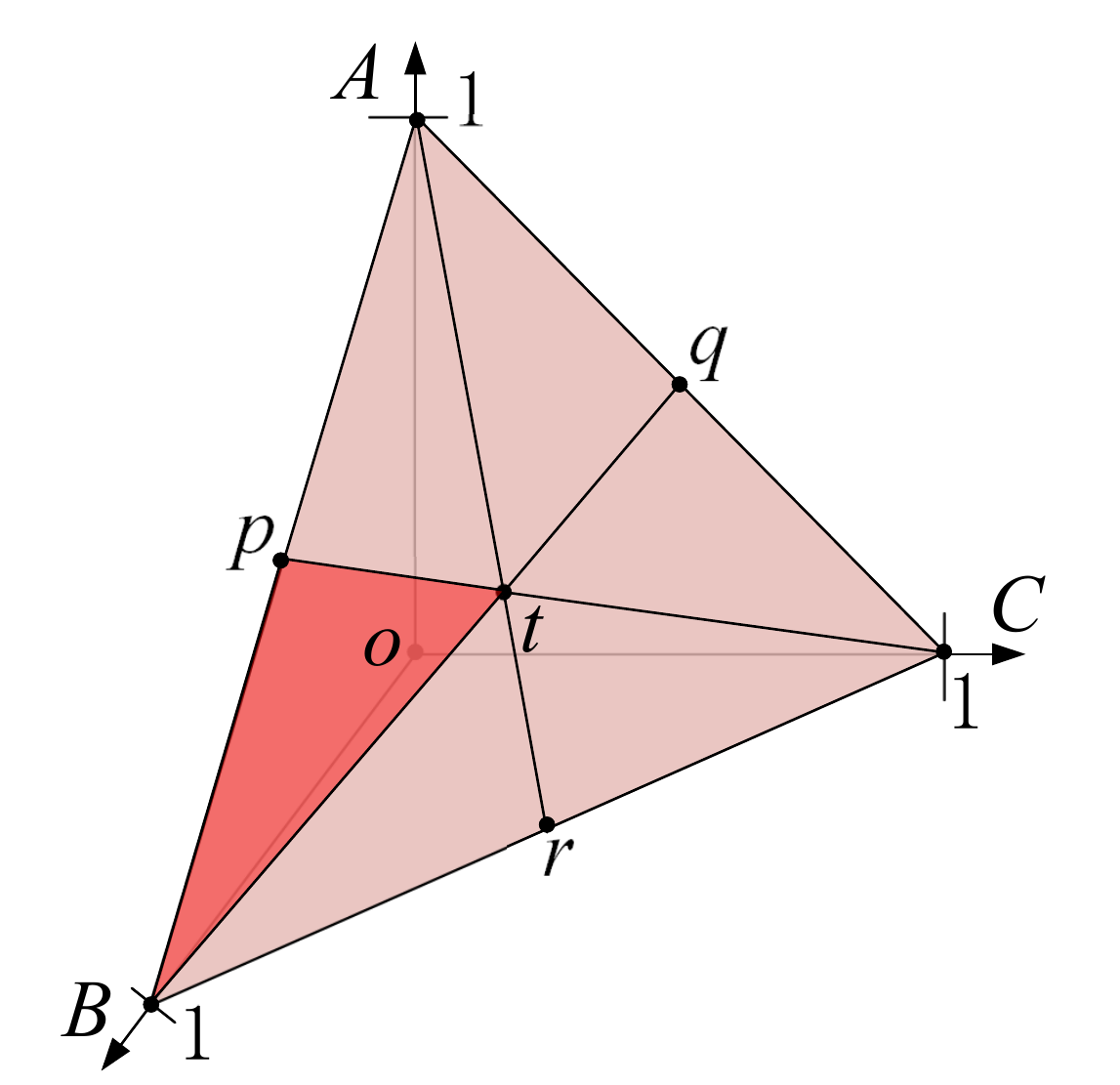}}
    \subfigure[WO Sorting]{\includegraphics[width=0.19\textwidth]{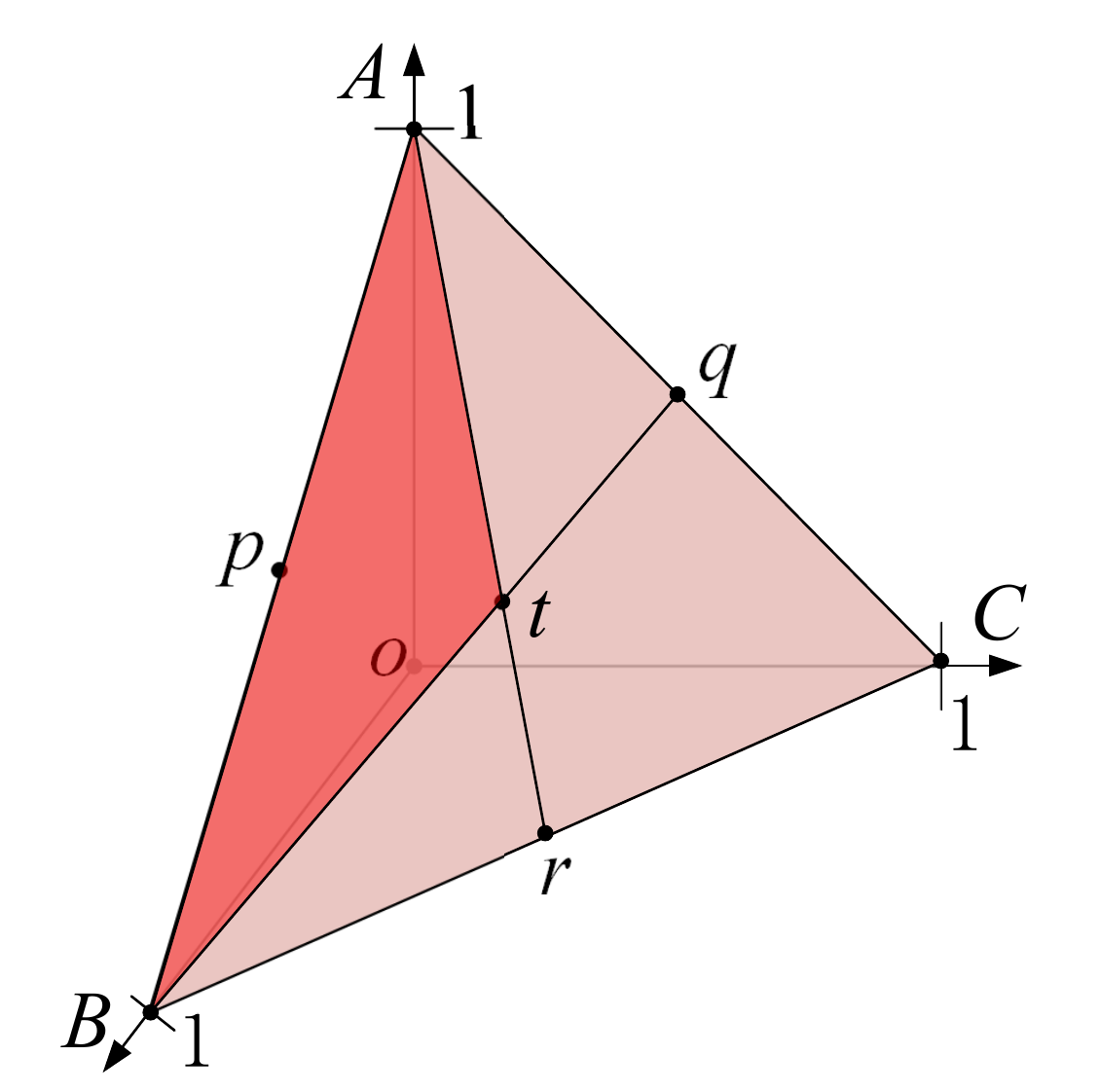}}
    \caption{Shrunk Utility Space}
    \label{fig:ShrunkUS}
%  \end{minipage}
\end{figure}

\noindent
When the system gets the feedback from the user, it prepares to select specified $s$ points presenting to the user for next round of interaction. Since the utility space has been shrunk to a smaller space, correspondingly, the candidate set for selecting points displayed to the user will also be reduced. In the literature, the skyline of the dataset $D$ is initially regarded as the candidate set $C$. We provide a strategy named utility hyperplane based candidate set pruning to reduce the size of the candidate set $C$ by removing non-maximum utility points in $C$.
%\noindent
%\textbf{Utility hyperplane pruning.} 
From Lemma \ref{lemma:shrinking}, if a utility function $f$ falls in $h_{p,q}^+\bigcap \mathcal{F}$, we can say that the user prefers $p$ to $q$. That is, we can safely prune $q$ if there is a $p$ in $C$ when $f$ is the user's utility function from $h_{p,q}^+\bigcap \mathcal{F}$. We summarize utility hyperplane based candidate set pruning by Lemma \ref{lemma:hyperplane}.

\begin{lemma}
\label{lemma:hyperplane}
Given the utility space $\mathcal{F}$, a point $q$ can be pruned from $C$ if there exists a point $p$ in $C$ such that $h_{q,p}^+\bigcap\mathcal{F}=\phi$.
\end{lemma}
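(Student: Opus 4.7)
The plan is to derive Lemma~\ref{lemma:hyperplane} as a direct consequence of Lemma~\ref{lemma:shrinking}. First I would unpack what ``$q$ can be pruned from $C$'' means in this setting: since $C$ is meant to contain every point that might realize the maximum utility for some $f\in\mathcal{F}$, it is safe to drop $q$ provided there is already some other candidate $p\in C$ with $f(p)\ge f(q)$ for all $f\in\mathcal{F}$. So the technical core of the proof is to show that the hypothesis $h_{q,p}^{+}\cap\mathcal{F}=\emptyset$ forces exactly this inequality uniformly over $\mathcal{F}$.

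The main step proceeds by contrapositive of Lemma~\ref{lemma:shrinking}. Lemma~\ref{lemma:shrinking} says: if the user (with utility $f\in\mathcal{F}$) prefers $q$ to $p$, then $f\in h_{q,p}^{+}\cap\mathcal{F}$. Equivalently, if $f\in\mathcal{F}$ lies \emph{outside} $h_{q,p}^{+}$, then the user does not prefer $q$ to $p$, i.e.\ $f(p)\ge f(q)$. Combining this with the assumption $h_{q,p}^{+}\cap\mathcal{F}=\emptyset$, every $f\in\mathcal{F}$ must lie outside $h_{q,p}^{+}$, and hence satisfies $f(p)\ge f(q)$. Therefore $q$ cannot strictly beat $p$ under any admissible utility function, while $p$ remains in $C$; in particular, discarding $q$ never removes the unique maximum-utility point for any $f\in\mathcal{F}$, which is precisely what ``prune from $C$'' is supposed to preserve.

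The main obstacle I anticipate is a boundary subtlety: the utility hyperplane $h_{p,q}$ itself corresponds to the tie case $f(p)=f(q)$, and one must decide whether $h_{q,p}^{+}$ is taken as the open or closed half-space, and whether ``prefers'' means strict or weak preference. If $h_{q,p}^{+}$ is the open half, then $h_{q,p}^{+}\cap\mathcal{F}=\emptyset$ only rules out strict preference for $q$ over $p$, which still gives $f(p)\ge f(q)$ everywhere on $\mathcal{F}$ — enough for pruning, since keeping $p$ preserves the maximum utility value. If it is the closed half, the same conclusion holds a fortiori. Either way, the pruning is safe, and I would state the convention explicitly (matching the convention used in Lemma~\ref{lemma:shrinking}) to close this gap cleanly. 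Finally, I would note that $\mathcal{F}\neq\emptyset$ is implicit in the setting (otherwise the regret problem is vacuous), so the conclusion is nontrivial.
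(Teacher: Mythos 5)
Your proposal is correct and follows essentially the same route as the paper, which justifies Lemma~\ref{lemma:hyperplane} directly from Lemma~\ref{lemma:shrinking}: the emptiness of $h_{q,p}^{+}\cap\mathcal{F}$ means no admissible utility function prefers $q$ to $p$, so $f(p)\ge f(q)$ for all $f\in\mathcal{F}$ and $q$ is safely dominated by $p$. Your extra care about the open/closed half-space and tie cases goes slightly beyond the paper's informal justification but does not change the argument.
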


\subsection{Displayed Points Selection}

\noindent
There are two strategies to select $s$ points from $C$ presenting to the user, namely \emph{random} and \emph{Simplex}. The idea of random strategy is to randomly select $s$ points from the candidate set $C$ to the user. The Simplex approach, based on the conical hull frame, according to the user's favorite point $p$ in the previous interaction round, uses Simplex method to pick neighbouring points of $p$ in the convex hull and displays them to the user. The idea of Simplex strategy is borrowed from the Simplex method for Linear Programming (LP) problems \cite{Dula:1998,Xie:2019}. Note that the maximum utility point must be a vertex in $Conv(D)$. It is time-saving to interactively check if there is a vertex in $Conv(D)$ with a higher utility than $p$ by displaying $p$ and at most $s-1$ neighboring vertices in $Conv(D)$ represented as $N_p$ to the user at each round. So we just present points in $C$ which are also vertices in $N_p$, \ie the vertex set $\{p \in C \bigcap N_p\}$ to the user. Instead of obtaining $N_p$ from $Conv(D)$ which is time-consuming to compute it for high dimensional dataset, we compute $N_p$ by $p$'s conical hull frame. Similar to \cite{Xie:2019}, we use the algorithm in \cite{Dula:1998} to compute the conical hull frame. Based on above analysis, we provide our sorting-based interactive regret minimization algorithm Sorting-Simplex as shown in Algorithm 1. Also, we propose our Sorting-Random algorithm using random points selection strategy instead of Simplex method (in line 9).

\begin{algorithm}[!htb]
    \scriptsize
	\caption{Sorting-Simplex Algorithm}
	\label{alg:1}
    \KwIn{dataset $D$, a regret ratio $\epsilon$, displayed points per interaction $s$, an unknown utility vector $f$, displayed point set $T$}
    \KwOut{a point $p$ in $D$ with $rr_Df(p)\leqslant \epsilon$}
    \BlankLine	
	Initially, $\mathcal{F} \leftarrow {f \in \mathbb{R}_+^d| \sum_{i=1}^d f[i] =1}$\;
    $C \leftarrow \text{the set of all skyline points}\in D$\;
    $p \leftarrow \text{a vertex of }  Conv(D)$\;
         \While{$||\mathcal{F}||_1>\frac{\epsilon}{2d}$ and $|C|>1$}
         {   
            $T\leftarrow$ display $p$ and $s-1$ points in $N_p \cap C$\; 
            $L\leftarrow \text{sort the points in $T$ with }f(L[1])>f(L[2])>...>f(L[s])$\;
              \If{$L[1]\neq p$}
        {
         $p\leftarrow L[1]$\;
         Use Simplex method to choose the neighboring vertices of $p$ in $Conv(D)$ to $N_p$\;
        }
       \For{$i=0$,$i<s$,$i++$}
       {
           \For{$j = 0$,$j < s$,$j++$}{
               \If{$i \neq j \ and  \ i < j$}{
                    $\mathcal{F}=\mathcal{F}\bigcap h_{L[i],L[j]}^+$\;
                   }
          }
       }
       \For{$i = 0, i < | C |,i++$}
           {
               \For{$j = 0, j < | C |,j++$}
                   {
                      \If{$i \neq j \ and \ i < j$}
                           {
                               \If{$h_{C[i],C[j]}^+ \bigcap \mathcal{F} = {\O}$}
                                   {
                                           remove C[$i$]\;
                                   }
                               \If{$h_{C[i],C[j]}^- \bigcap \mathcal{F} = {\O}$}
                                   {
                                           remove C[$j$]\;
                                   }
                           }
                   }
           }
        }
\textbf{return}  $p=\arg\max_{q\in C}f\cdot q$ where $f\in \mathcal{F}$;
\end{algorithm}

In Algorithm 1, we first initialize the candidate set $C$ to be the skyline of $D$ and the utility space $\mathcal{F}$ to be the whole linear utility space (lines 1-2). Then we choose a point in the convex hull of $D$ (line 3). If not satisfying the stop condition (below a small regret ratio with $||\mathcal{F}||_1>\frac{\epsilon}{2d}$ \cite{Xie:2019} or $C$ has only one point, line 4), Algorithm 1 will choose $s$ points presented to the user (line 5) and the user sorts the $s$ points in descending order of their utilities (line 6). The system exploits Simplex method to obtain the neighbouring vertexes in the convex hull (line 7-9). Then the system shrinks the utility space $\mathcal{F}$ with $C_s^2$ utility hyperplanes (lines 10-13) and reduces the size of the candidate set $C$ with utility hyperplane pruning (line 14-20). At length, the system returns the user's favorite point or the point with the regret ratio no larger than $\epsilon$.

Following we present the lower bound of the number of interaction rounds needed to return the user's favorite point.

\begin{theorem}
	For any $d$-dimensional dataset, there is an algorithm that needs $\Omega(log_{C_s^2}n)$ rounds of interaction to determine the user's favorite point.
\end{theorem}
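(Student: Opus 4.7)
The plan is to model any interactive algorithm $A$ that is guaranteed to identify the user's favorite point as a decision tree $T$ whose depth equals $A$'s worst-case number of interaction rounds. Each internal node represents one round in which $A$ commits to an $s$-subset of $D$ (as a function of previous responses), each outgoing edge corresponds to one possible sorted order that the user could return, and each leaf is labelled by the point that $A$ outputs. First I would fix a worst-case dataset, e.g., $n$ points on an arc of the positive unit sphere, so that every point of $D$ is the unique top-1 point for some $f\in\mathcal{F}$. Then correctness of $A$ forces every such point to appear on at least one leaf, so $T$ has at least $n$ leaves.

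Next I would bound the out-degree of every internal node. Fix the $s$ displayed points $p_1,\dots,p_s$. By Lemma~\ref{lemma:shrinking}, the sorting the user returns is determined entirely by which cell of the arrangement of the $C_s^2$ utility hyperplanes $\{h_{p_i,p_j}:1\le i<j\le s\}$ contains $f$. Because every $h_{p_i,p_j}$ passes through the origin, this is a central hyperplane arrangement in $\mathbb{R}^d$; by the classical Zaslavsky formula the number of full-dimensional chambers of such an arrangement is $O((C_s^2)^{d-1})$. Consequently, the out-degree $b$ of every internal node of $T$ satisfies $b\le c\,(C_s^2)^{d-1}$ for a constant $c$ depending only on $d$.

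Combining the two estimates, a tree with branching factor at most $b$ and depth $R$ has at most $b^R$ leaves, so $b^R\ge n$ yields $R\ge \log n / \log b = \log n / ((d-1)\log C_s^2 + O(1)) = \Omega(\log_{C_s^2} n)$ for fixed $d$, which is the claimed bound.

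The main obstacle is the branching-factor estimate: the $C_s^2$ hyperplanes need not be in general position, and when the display points are chosen adaptively the arrangement can be highly degenerate (several hyperplanes may share a common subspace). The standard remedy is an infinitesimal perturbation of the $p_i$'s, which can only split cells further and therefore preserves $O((C_s^2)^{d-1})$ as an upper bound. A small additional point to verify is that this perturbation argument applies inside the utility cone $\mathcal{F}\subseteq\mathbb{R}_+^d$ rather than in all of $\mathbb{R}^d$; this follows from homogeneity of the hyperplanes $h_{p_i,p_j}$, which are invariant under positive scaling and so partition $\mathcal{F}$ in the same way they partition a hemisphere of $S^{d-1}$.
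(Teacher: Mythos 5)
Your proposal is correct and, at the strategic level, it is the same argument the paper gives: model the interaction as a decision tree with at least $n$ leaves and divide $\log n$ by the logarithm of the per-round branching factor. The genuine difference lies in how the branching factor is handled, and there your version is the more defensible one. The paper's proof declares the tree to be ``$s$-ary'' and then, by analogy with performing $C_s^2$ pairwise comparisons per round, takes the base of the logarithm to be $C_s^2$; it never identifies what the out-degree of a node actually is (the number of possible responses to a sorted display is neither $s$ nor $C_s^2$ but the number of realizable orderings of the $s$ points, which is at most $s!$ and at most the number of chambers of the comparison arrangement), nor does it argue why the tree must have $n$ distinct leaves. You supply both missing pieces: a hard instance (points on the positive unit sphere, each the unique maximizer of the utility vector equal to itself) that forces all $n$ points to occur as leaf labels, and an out-degree bound of $O\bigl((C_s^2)^{d-1}\bigr)$ via the chamber count of the central arrangement of the hyperplanes $h_{p_i,p_j}$, which for fixed $d$ gives $\log b = O(\log C_s^2)$ and hence the stated $\Omega(\log_{C_s^2} n)$. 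Your observation that degeneracy only reduces the number of chambers is right, so the perturbation step is dispensable. Two caveats are worth recording: the statement must be read as a worst-case lower bound over all deterministic algorithms (the theorem's phrasing ``there is an algorithm that needs'' is misleading, but the paper's own closing sentence confirms this reading, and randomized algorithms would additionally require an averaging or Yao-style argument), and the hidden constant in your $\Omega$ depends on $d$ through the factor $1/(d-1)$, which is harmless since $d$ is treated as fixed.
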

\begin{proof}
The step of determining the user's favorite point in the interaction can be simulated in the form of a tree. Consider an $s$-ary tree with height $r$, $r$ representing the rounds of interaction, each leaf node of the $s$-ary tree representing the data point in $D$. If we show the $s$ points in each interaction round, the user sorts the $s$ points, and we can get $C_{s}^2$ comparison information, similar to in the case of no comparison showing 2 points a time for $C_s^2$ rounds. Since it is a $s$-ary tree with $n$ leaves, the height of the tree is $\Omega(log_{C_s^2}n)$. In other words, any algorithm needs $\Omega(log_{C_s^2}n)$ rounds of interaction to identify the maximum utility point in the worst case.
\end{proof}
\begin{table}[!htb]
	\setlength{\belowcaptionskip}{-0.2mm}
 \centering
  \caption{UH-Simplex example with utility function $f=<0.3,0.3,0.2,0.2>$}
  	\setlength{\tabcolsep}{0.3mm}
  \label{tab:UH-Simplex}
 \footnotesize
  \begin{tabular}{|c|l|c|c|c|c|c|c|c|c|}\hline
    Round & Player name        &season     &points             &rebound            &steals        &assists        & utility &regret ratio              \\ \hline \hline
     \multirow{3}{*}{1} & \textbf{Wilt Chamberlain}	 &1961    &4029	             &2052	             &0	            &192            &1862.7          &0\%        \\ 	
    & Michael Jordan	   &1988  &2633	             &652	             &234	        &650            &1162.3	                 &37.60\%                    \\ 						
    & Michael Jordan	   &1987  &2868	             &449                &259	        &485	        &1143.9                  &38.59\%                      \\ \hline
   \multirow{3}{*}{2} & \textbf{Wilt Chamberlain}	&1961     &4029	             &2052	             &0	            &192            &1862.7          &0\%            \\
    & Mike Conley	    &2008 &2505	             &251                &354           &276            &952.8                   &48.85\%                       \\
    & Tiny Archibald	&1972     &2719	             &223                &0	            &910            &1064.6                  &42.86\%	                \\ \hline
    \multirow{3}{*}{3} & \textbf{Wilt Chamberlain}	&1961     &4029	             &2052	             &0	            &192            &1862.7          &0\%         \\
    & John Stockton	    &1988    &1400               &248                &263	        &1118	        &770.6                   &58.63\%                       \\   						
    & Wilt Chamberlain	&1960     &3033               &2149	             &0	            &148	        &1584.2                  &14.95\%                   \\ \hline
    \multirow{3}{*}{4} & \textbf{Wilt Chamberlain}	 &1961    &4029	             &2052	             &0	            &192            &1862.7          &0\%      \\
    & Wilt Chamberlain	    &1967  &1992	             &1952	             &0	            &702            &1323.6                  &28.94\%                   \\  						
    & Isiah Thomas	       &1984  &1720	             &361	             &187	        &1123	        &886.3                   &52.42\%                       \\ \hline
    \multirow{3}{*}{5} & \textbf{Wilt Chamberlain}	&1961     &4029	             &2052	             &0	            &192            &1862.7          &0\%         \\
    & Oscar Robertson 	&1961  &2432	             &985	             &0	            &899	        &1204.9                  &35.31\%                        \\ 						
    & Michael Jordan 	&1986  &3041	             &430	             &236      	    &377            &1163.9                  &37.52\%                       \\ \hline
   \multirow{2}{*}{6} &  \textbf{Wilt Chamberlain} &1961	     &4029	             &2052	             &0	            &192            &1862.7          &0\%            \\
    & McGinnis George	     &1974   &2353	             &1126	             &206           &495	        &1183.9                  &36.44\%                   \\ \hline 						

  \end{tabular}
\end{table}
\begin{table}[!htb]
	\setlength{\belowcaptionskip}{-0.2mm}
 \centering
   	\setlength{\tabcolsep}{0.3mm}
  \caption{Sorting-Simplex example with utility function $f=<0.3,0.3,0.2,0.2>$}
  \label{tab:Sorting-Simplex}
 \footnotesize
  \begin{tabular}{|c|l|c|c|c|c|c|c|c|c|}\hline
     Round & Player name  &season &  points & rebound & steals & assists & utility & regret ratio \\\hline\hline
     \multirow{3}{*}{1} & Wilt Chamberlain \textcircled{1} & 1961 & 4029	              &2052	              &0	         &192	         &1862.7                &0\%        \\
     & Oscar Robertson \textcircled{3} &1961   & 2432	              &985	              &0             &899	         &1204.9                &35.31\%       \\ 					
     & Wilt Chamberlain \textcircled{2} &1967  & 1992	              &1952	              &0	         &702            &1323.6                &28.94\%       \\ \hline
      \multirow{2}{*}{2} & Wilt Chamberlain \textcircled{1}&1961     & 4029	              &2052	              &0	         &192	         &1862.7                &0\%        \\
     & Wilt Chamberlain \textcircled{2} &1960  & 3033	              &2149	              &0             &148	         &1584.2                &14.95\%                               \\ \hline 						
  \end{tabular}
\end{table}
\noindent
In order to describe the advantage of our Sorting-Simplex algorithm, we take the 4-dimensional NBA dataset as an example, and the four dimensions represent a play's statistics on \emph{points}, \emph{rebounds}, \emph{steals}, and \emph{assists} respectively. The method proposed in \cite{Xie:2019} named the UH-Simplex algorithm corresponds to Table \ref{tab:UH-Simplex} and our algorithm refers to Table \ref{tab:Sorting-Simplex}.  We assume the user's utility function $f$ is (0.3, 0.3, 0.2, 0.2). In the process of interaction, the maximum regret ratio between the point shown by the Sorting-Simplex algorithm and the user's favorite point is 35.31\%, and that of the UH-Simplex algorithm is 58.63\%. UH-Simplex needs 6 rounds of interaction but our Sorting-Simplex only needs two rounds. We can see that at each interaction round Wilt Chamberlain in 1961 season is with the best performance \textit{w.r.t.} the user's utility function (denoted as $p$, the user's favorite point). Even we add other players in different seasons (vertexes in $N_p$) for the user to choose, this record is still the user's favorite. For the Sorting-Simplex algorithm there are only two points displayed for the last interaction round. Since the whole candidate set $C$ only has two points left, they are both taken out for the user to choose from, and the one that the user chooses is his/her favorite point.
\vspace{-2mm}
\section{Experimental Results}
\label{sec5:exp}

\noindent
In this section, we verify the efficiency and effectiveness of our algorithms on both synthetic and real datasets.
\vspace{-5mm}
\subsection{Setup}
\vspace{-3mm}
\noindent
We conducted experiments on a 64-bit machine with 2.5GHz CPU and 8G RAM on a 64-bit whose operating system is the Ubuntu 16.04 LTS. All programs were implemented in GNU C++. The synthetic datasets were generated by the dataset generator \cite{Borzsony:2001}. The anti-correlated datasets all contains 10,000 points with 4, 5 and 6 dimensions. 
For real datasets, we adopted Island, NBA and Household datasets. Island is 2-dimensional, which contains 63,383 geographic positions \cite{Tao:2009}. 
NBA dataset\footnote{\url{https://www.rotowire.com/basketball/}} contains 21,961 points for each player/season combination from year 1946 to 2009. Four attributes are selected to represent the performance of each player, \ie \emph{total scores}, \emph{rebounds}, \emph{assists} and \emph{steals}. Household\footnote{\url{http://www.ipums.org}} is a 7-dimensional dataset consisting of 1,048,576 points, showing the economic characteristics of each family of US in 2012. All the attributes in these datasets are normalized into [0,1]. Unless specified explicitly, the number of displayed points $s$ is 4. Our algorithms were compared with previous UH-Simplex algorithm \cite{Xie:2019}, UH-Random algorithm \cite{Xie:2019}, and the UtilityApprox algorithm \cite{Nanongkai:2012}. Moreover, like studies in the literature \cite{Nanongkai:2010,Peng:2014,Nanongkai:2012,Faulkner:2015,Xie:2018}, we computed the skyline first and then identified the user's favorite point from it.
\vspace{-5mm}
\subsection{Results on Synthetic Datasets}

\noindent
In Fig. \ref{varys-synthetic}, above 5 mentioned algorithms were run on the Anti-5d dataset with the final regret ratio not more than 2\%. We varied the number of displayed points $s$ from 3 to 6 and used the number of total displayed points during the interaction to measure the performances of these 5 algorithms. In order to ensure that the user's regret ratio cannot exceed 2\%. In Fig. \ref{varys-synthetic}(a), the UtilityApprox algorithm needs to present about 112 points to the user.
%But such behavior can make users feel cheated, and the user may exit the interaction.
When $s=3$, we find that our Sorting-Simplex algorithm finally presents only 24 points to the user, meeting the 2\% regret ratio. And the last point displayed is the user's favorite point. However, UtilityApprox needs to show 105 points and require 35 rounds of interaction to meet the requirement of the regret ratio. The UH-Simplex algorithm requires 14 rounds to meet the user's regret ratio. We observe that the Sorting-based algorithms \ie Sorting-Random and Sorting-Simplex can reduce the rounds of user interaction. Although the algorithms which exploit random point selection strategy do not provide provable guarantees on the number of interaction rounds, they are a little better than Simplex-based algorithms in rounds of interaction. Also, they need less time to execute due to their randomness (Fig. \ref{varys-synthetic}(b)). We also observe that as the number of points for each round increases, the total number of interaction rounds along with the total number of displayed points decreases. For example, when $s=3$, we need 8 rounds of interaction, showing a total of 24 points. But when $s=6$, only 3 rounds of interaction are needed, and the total number of displayed points is 18. 

\begin{figure}[htbp]
	\setlength{\abovecaptionskip}{-1.5mm}
	\setlength{\belowcaptionskip}{-1mm}
\begin{minipage}[t]{0.5\textwidth}
  \centering
    \centering
    \includegraphics[height=2.8cm]{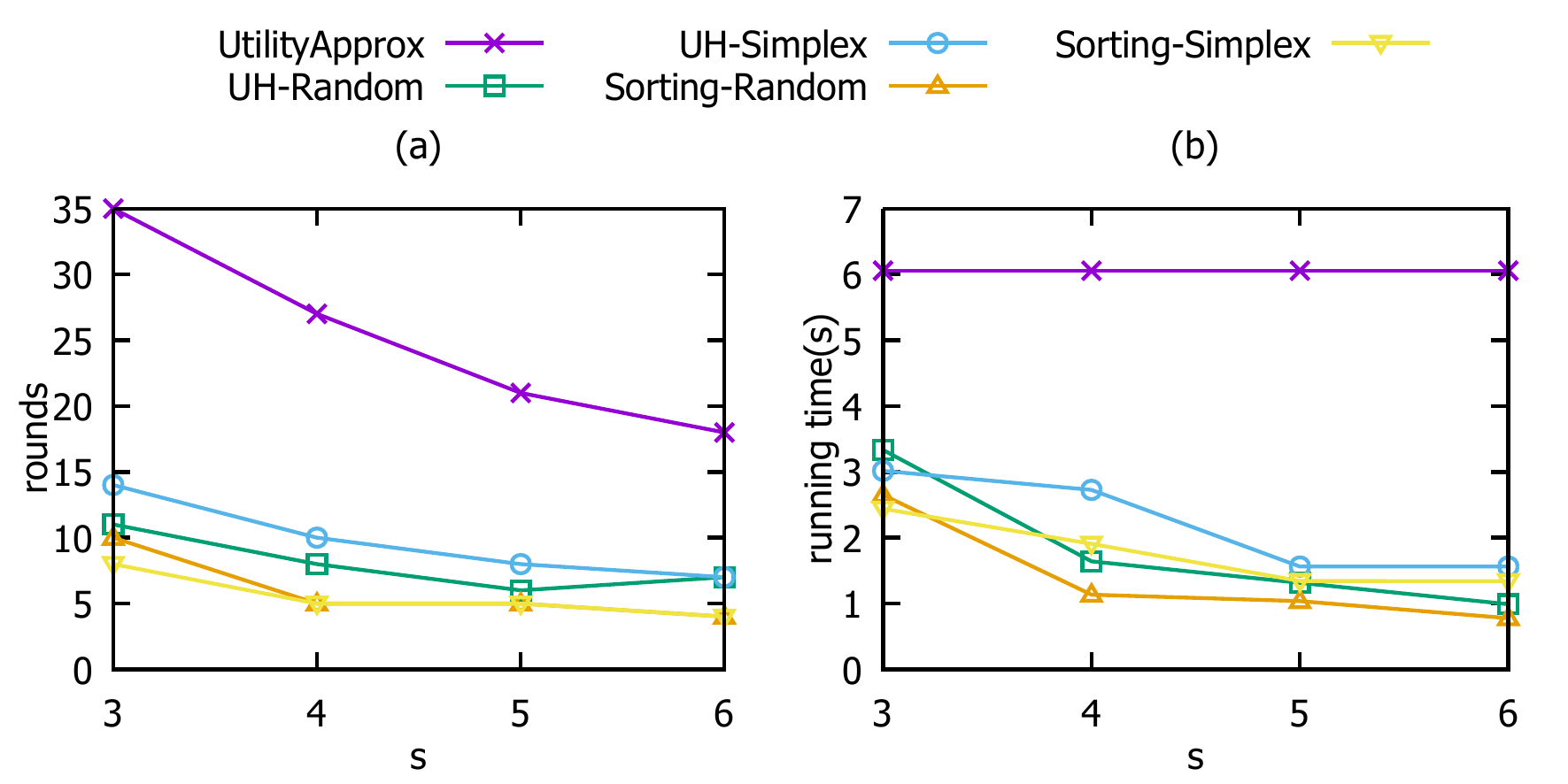}
    \caption{vary $s$ on the Anti-5d dataset}
    \label{varys-synthetic}
\end{minipage}
\begin{minipage}[t]{0.5\textwidth}
  \centering
  \includegraphics[height=2.8cm]{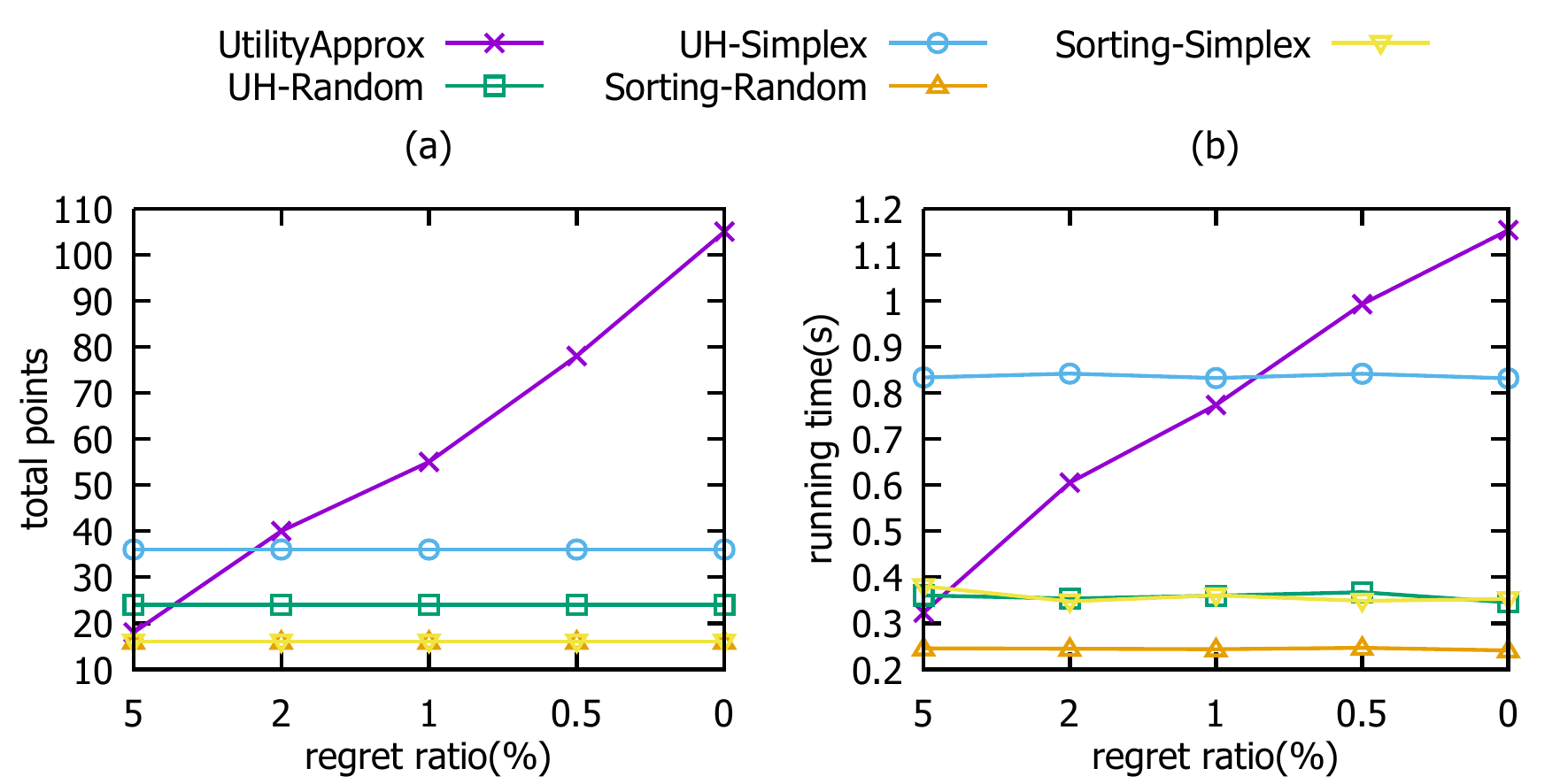}
  \caption{Vary Regret Ratio($d=4$, $s=4$, $n=10,000$)}
  \label{varyrr-fixed(n-d)}
\end{minipage}
\end{figure}

\noindent
In Fig. \ref{varyrr-fixed(n-d)}, we compared the performances of the 5 algorithms under different regret ratios. The regret ratio ranges from 5\% to 0\%, and the smaller value is better. Although we set the required regret ratio is not larger than 5\%, the regret ratios of the result sets returned by the 4 algorithms, Sorting-Simplex, Sorting-Random, UH-simplex, UH-Random are all 0\% (they are flat lines in Fig. \ref{varyrr-fixed(n-d)}(a)). But the regret ratio of UtilityApprox is 4.87\%, which performs worse than the other algorithms. We observe that the Sorting-based algorithms are better than the other algorithms, either in the number of displayed points or in the running time. And the Sorting-Simplex algorithm takes less time than UH-Simplex and UtilityApprox. The total number of displayed points of Sorting-Simplex is less than that of the UH-Simplex algorithm, because when $s$ points are shown, the UH-Simplex algorithm can only get the $s-1$ comparisons for the candidate set pruning. As a contrast, our Sorting-Simplex algorithm can get $C_s^2 $ comparisons which are exploited to delete larger amount of the data points having no possibility to be the maximum utility point from the candidate set. Also, Sorting-Simplex only needs to show half number of the points of UH-Simplex to achieve the same regret ratio. If the user wants to choose his/her favorite point, UtilityApprox needs to show 105 points compared with the other 4 algorithms. We know that the more points shown to the user, the more effort he/she will take to browse them. So UtilityApprox wastes a lot of the user's effort and takes up too much time of the user (as shown in Fig. \ref{varyrr-fixed(n-d)}(b)). This leads to the worst performance of UtilityApprox against the other 4 algorithms.

We also evaluated the scalability of our Sorting-based algorithms in Fig. \ref{varyn-time-totalpoints} and Fig. \ref{varyd-time-totalpoints}. In Fig. \ref{varyn-time-totalpoints}, we studied the scalability of each algorithm on the dataset size $n$. Our Sorting-Simplex algorithm scales well in terms of the running time while showing the smallest amount of points to the user. In particular, to guarantee a 0.1\% regret ratio on a dataset with 20,000 points, the number of points we display is half of that of UH-Simplex and one sixth of that of UtilityApprox (Fig. \ref{varyn-time-totalpoints}(a)). Besides, the other 4 interactive algorithms are significantly faster than UtilityApprox (Fig. \ref{varyn-time-totalpoints}(b)). In Fig. \ref{varyd-time-totalpoints}, we studied the scalability of each algorithm on the dimensionality $d$. Compared with UH-Simplex and UtilityApprox, Sorting-Simplex and Sorting-Random consistently show fewer points in all dimensions, verifying the usefulness of sorting points in reducing the rounds of interaction.
\begin{figure}[htbp]
	\setlength{\abovecaptionskip}{-1.5mm}
	\setlength{\belowcaptionskip}{-1mm}
\begin{minipage}[t]{0.5\textwidth}
  \centering
  \includegraphics[height=2.8cm]{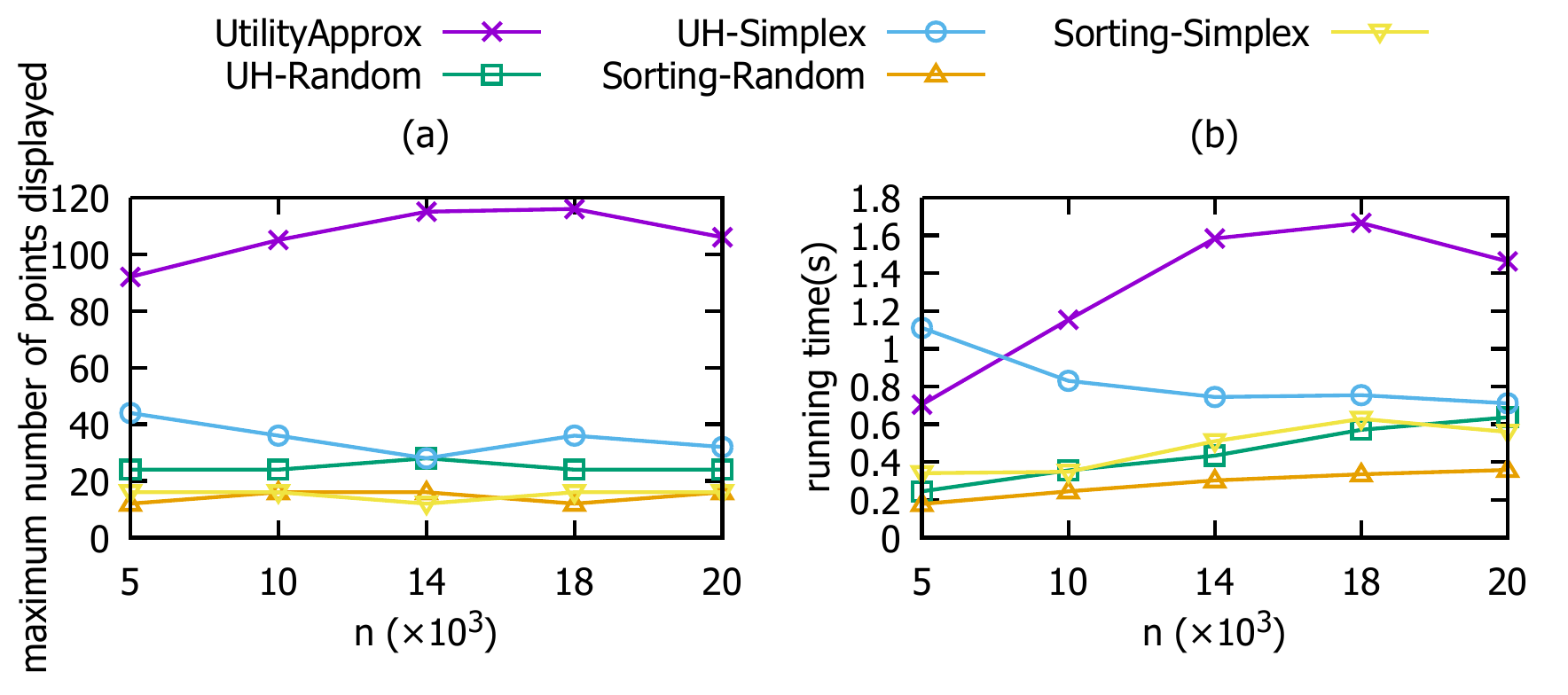}
  \caption{Vary $n$($d$=4,$s$=4,$\epsilon$=0.1\%)}
  \label{varyn-time-totalpoints}
\end{minipage}
\begin{minipage}[t]{0.5\textwidth}
    \centering
  \includegraphics[height=2.8cm]{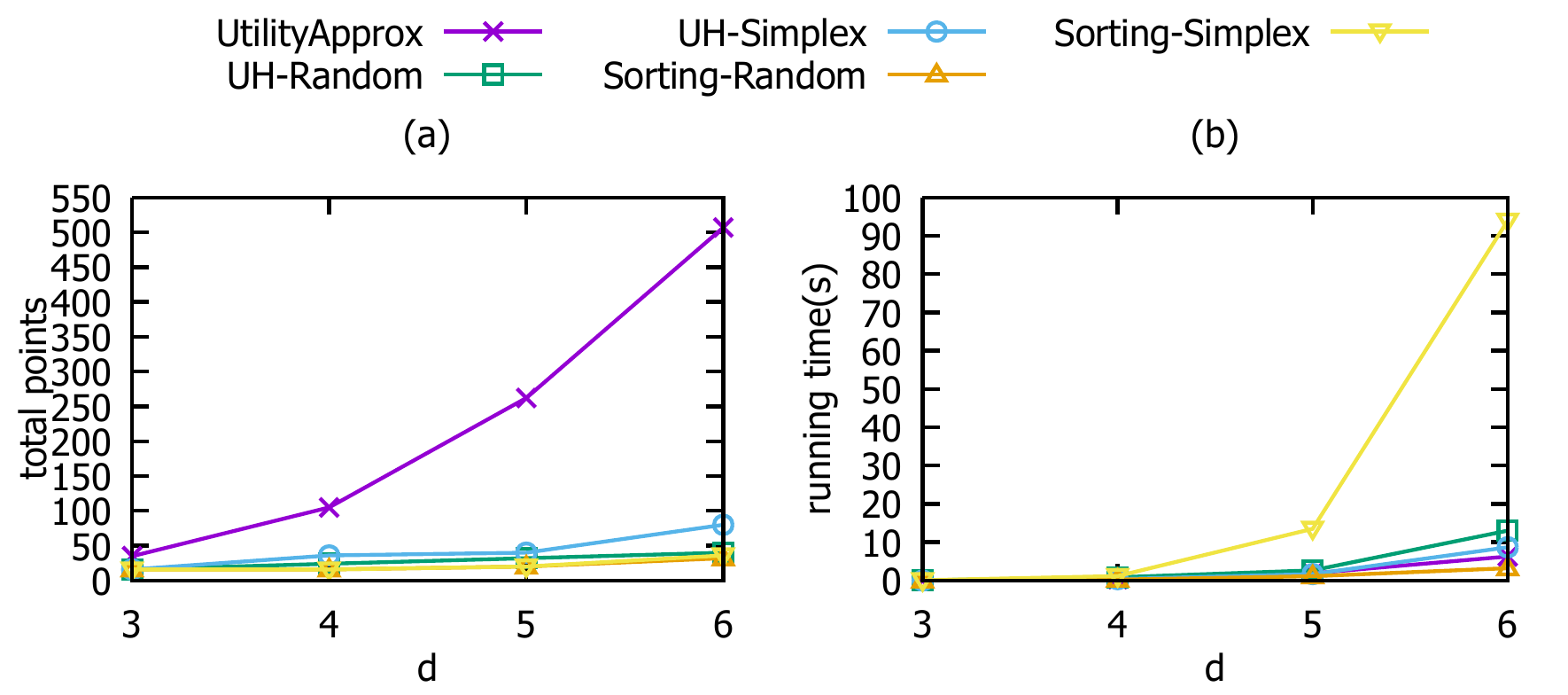}
  \caption{Vary $d$($n$=10,000,$s=4$,$\epsilon$=0.1\%)}
  \label{varyd-time-totalpoints}
\end{minipage}
\end{figure}

\vspace{-3mm}
\subsection{Results on Real Datasets}

\noindent
We studied the effects of the algorithms on the 3 real datasets in terms of the regret ratio, candidate set size and running time of each algorithm. Note that our sorting-based algorithms perform very efficiently on real datasets. This is because that sorting the displayed points can generate more information for learning user's utility function and reducing the candidate set size. Note that when the running time remains unchanged (Fig. \ref{vary-totalpoints-on-NBA}(c), Fig. \ref{vary-totalpoints-on-household}(c)), it means the points displayed in the previous interaction round satisfy the user's requirement, there is no need to present more points to the user. The random algorithms, \ie UH-Random and Sorting-Random with unstable tendency are due to the randomness for the displayed point selection. 

\begin{figure}[htbp!]
	\setlength{\abovecaptionskip}{-1.5mm}
	\centering
	\includegraphics[height=3cm]{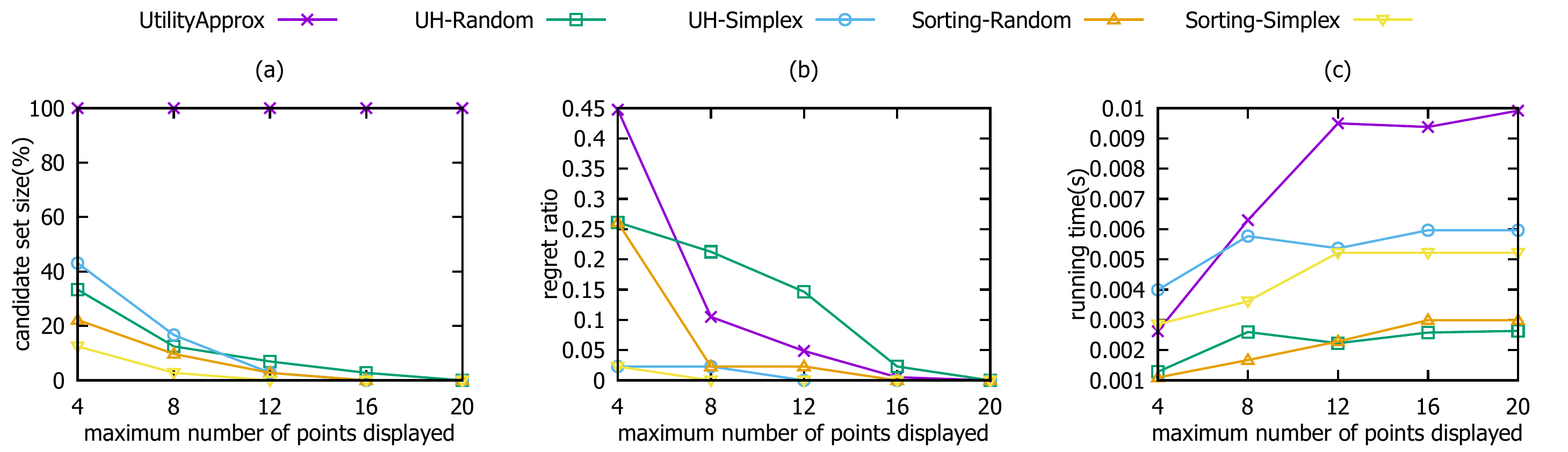}
	\caption{Vary maximum number of points displayed on NBA}
	\label{vary-totalpoints-on-NBA}
\end{figure}

\noindent
The results on the NBA and Household datasets are shown in Fig. \ref{vary-totalpoints-on-NBA} and Fig. \ref{vary-totalpoints-on-household} where we vary the maximum number of points displayed. Our sorting-based algorithms effectively reduce the candidate set size and take only a few seconds to execute. The Sorting-Simplex algorithm reached 0\% regret ratio in the 3rd round. 
When the Sorting-Simplex algorithm is executed, the candidate set size is reduced rapidly. In particular, after 2 rounds (\ie total 8 points presented to the user since $s$ = 4), we prune 98\%, 50\% of data points in the candidate set on NBA and Household as shown in Fig. \ref{vary-totalpoints-on-NBA}(a) and Fig. \ref{vary-totalpoints-on-household}(a), respectively.

\begin{figure}[htbp!]
	\setlength{\abovecaptionskip}{-1.5mm}
	\setlength{\belowcaptionskip}{-3mm}
    \centering
    \includegraphics[height=3cm]{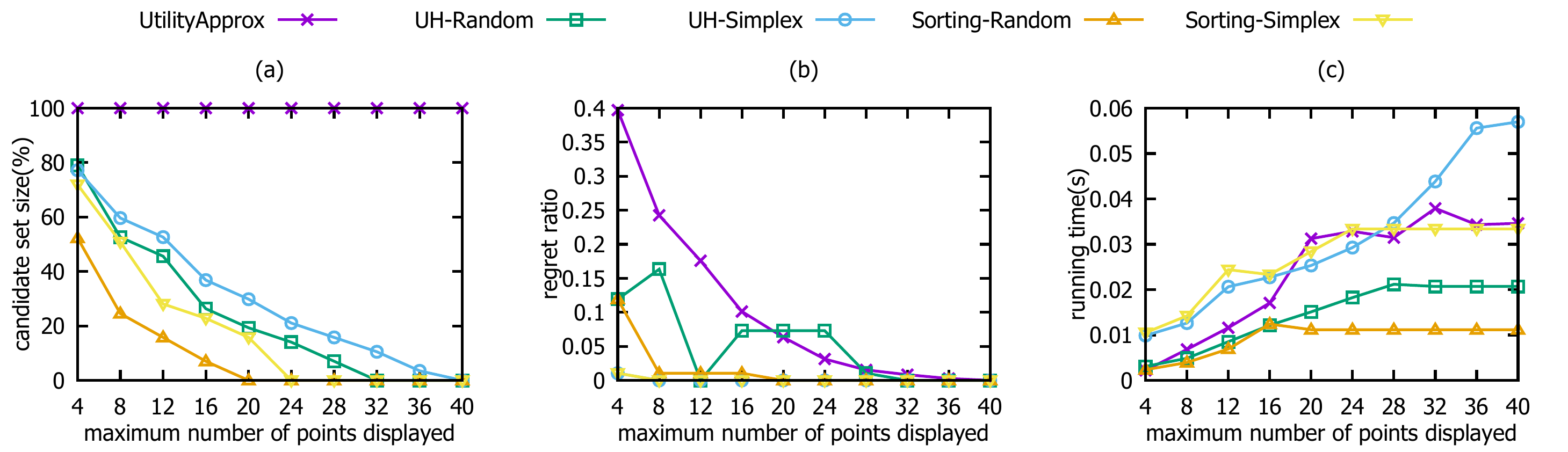}
    \caption{Vary maximum number of points displayed on household}
    \label{vary-totalpoints-on-household}
\end{figure}

\begin{figure}[htbp!]
	\setlength{\abovecaptionskip}{-1.5mm}
	\setlength{\belowcaptionskip}{-1mm}
    \centering
    \includegraphics[height=2.8cm]{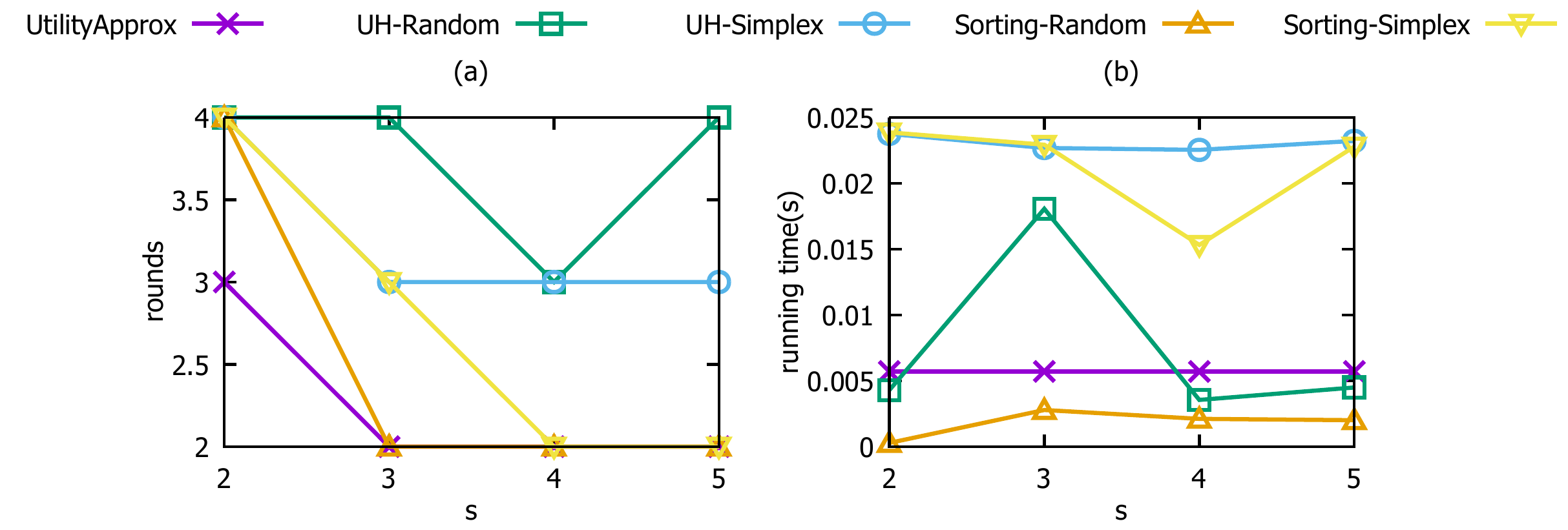}
    \caption{vary $s$ on island($d$ =2, $\epsilon =0\%$)}
    \label{varys-time-questions-on-island}
\end{figure}

\noindent
When the system required the same regret ratio of result set for each algorithm, we found that our Sorting-Simplex algorithm performs best among all the algorithms as shown in Fig. \ref{varyrr-fixed(n-d)}. Moreover for smaller target regret ratios, Sorting-Simplex clearly outperforms UH-Simplex and UtilityApprox. The same phenomenon occurs when we increase the number of points shown to the users, as shown in Fig. \ref{vary-totalpoints-on-NBA}(b), Fig. \ref{vary-totalpoints-on-household}(b). This confirms that the idea of sorting is crucial in reducing the rounds of interaction. The results on the Island dataset are shown in Fig. \ref{varys-time-questions-on-island} where we vary the number of displayed points. In Fig. \ref{varys-time-questions-on-island}(a), we find that only 3 or 4 rounds needed for interaction due to low dimensionality. From Fig. \ref{vary-totalpoints-on-NBA}(c), Fig. \ref{vary-totalpoints-on-household}(c) and Fig. \ref{varys-time-questions-on-island}(b), our sorting-based algorithms are competitive over other algorithms in running time. However, the time spent by the UtilityApprox algorithm is not longer than the UH-based algorithms due to the fact that the points presented by the UtilityApprox algorithm are artificial/fake points. These points do not take time to select from the dataset. 
\vspace{-2mm}
\section{Conclusion}
\label{sec6:conclusion}

\noindent
In this paper, we present sorting-based interactive framework for regret minimization query. With the help of nice properties of geometric objects describing multidimensional data points, such as boundary point, hyperplane, convex hull, conical hull frame, neighbouring vertex etc, we fully exploit the pairwise relationship of the sorted points to shrink the user's possible utility space greatly and reduce the size of the candidate set which has a consequence that our proposed method requires less rounds of interaction. Experiments on synthetic and real datasets verify our proposed Sorting-Random and Sorting-Simplex algorithms are superior to existing algorithms in terms of interaction rounds and running time.

\subsubsection*{Acknowledgments}
This work is partially supported by the National Natural Science Foundation of China under grants U1733112, 61702260 and the Fundamental Research Funds for the Central Universities under grant NS2020068.

\vspace{-4mm}
\bibliographystyle{splncs04}
\bibliography{ref}

\end{document}